\documentclass[aps,prd,preprint,superscriptaddress,longbibliography,nofootinbib,floatfix]{revtex4-2}
\usepackage{CJKutf8}
\usepackage{booktabs}
\usepackage{graphicx}
\usepackage{amsmath,amssymb,amsthm}
\usepackage{mathrsfs}
\usepackage{orcidlink}
\numberwithin{equation}{section}

\newtheorem{theorem}{Theorem}[section]

\theoremstyle{definition}

\theoremstyle{remark}

\begin{document}

\title{Attraction and repulsion along extremal Einstein--two-Maxwell--dilaton branches}

\author{Ye Zhou (\begin{CJK*}{UTF8}{gbsn}周烨\end{CJK*})\orcidlink{0009-0004-7050-7736}}
\email{ye.zhou.horizon@gmail.com}
\affiliation{Independent Researcher, Kunshan, Jiangsu, China}

\begin{abstract}
We study extremal mass branches in four-dimensional Einstein gravity coupled to one dilaton and two Maxwell fields with opposite-sign dilaton couplings. For generic couplings the extremal mass is governed by a nonlinear first-order equation with no known closed-form solution. We prove that this equation has a unique positive $C^2$ solution on the full real charge-ratio line and that the solution is real analytic, so positivity on the entire positive charge cone selects a unique global mass branch. Along this branch, any two nonproportional positive charge vectors attract for $ab>-1$, have vanishing leading force at $ab=-1$, and repel for $ab<-1$. The same transition fixes the transverse convexity of the extremal mass surface and the sign of the mass difference under arbitrary finite positive-charge partitions. The sign theorem extends to higher dimensions through the known dimensional rescaling. A radial reconstruction then shows that every point of the mass branch defines an asymptotically flat exterior with a finite-area inner endpoint, while smooth or analytic horizon extension is a separate condition that restricts the couplings and scalar flow.
\end{abstract}

\maketitle

\section{Introduction}\label{Sec:Intro}

Extremal charged black holes provide a natural setting in which mass, charge, scalar hair, and long-range interactions can be compared. In Einstein--Maxwell theory, the balance between gravitational attraction and electrostatic repulsion underlies the Majumdar--Papapetrou family \cite{Majumdar1947,Papapetrou1947}. Once massless scalars are present, however, extremality of an individual state no longer fixes the force between states with different charge ratios. This distinction also enters force-based formulations of weak-gravity ideas in theories with moduli \cite{RepulsiveForces2019,Heidenreich2020}.

Dilaton gravity has long served as a useful laboratory for these questions \cite{GibbonsMaeda1988,GarfinkleHorowitzStrominger1991}. More recent work has focused on interactions between inequivalent extremal states. Repulsive behavior occurs in two-vector dilatonic systems and in exact Toda examples \cite{BlackHolesRepel2019,TodaForce2022}. In Einstein--Maxwell--dilaton theory, the force between different charge ratios was found to track both the mass difference under charge addition and the convexity or concavity of the extremal mass surface \cite{EMD2023}. The correspondence is not universal: in Einstein--Maxwell--dilaton--axion theory, force and binding energy need not have the same sign \cite{EMDA2025}.

A particularly useful intermediate theory contains one dilaton and two independently coupled Maxwell fields. Exact black-hole solutions are known for special choices of the couplings, including Toda-integrable cases \cite{Lu2013,Abishev2017,Rank2Toda2026}, whereas generic couplings do not admit comparably simple closed forms. This has motivated solution-independent approaches to black-hole mass, scalar charge, and thermodynamics \cite{LuYangLu2025,YangLuLu2026}. The recent E2MD analysis of Ref.~\cite{E2MD2026} reduced the extremal mass problem to a nonlinear first-order equation, identified $ab=-1$ as the four-dimensional force-cancellation locus, and established the attraction--repulsion pattern in several analytically controlled regimes and integrable examples. What remained was to determine whether this sign change is a property of the entire positive extremal mass branch at arbitrary finite charge ratio, and whether the same global transition controls the curvature and charge-addition energetics of the mass surface.

We prove that it is. For $a>0>b$ and positive electric charges, positivity on the full open charge cone selects a unique global real-analytic mass branch. On this branch the force between any two nonproportional positive charge vectors has sign $-\operatorname{sgn}(1+ab)$, and the same transition controls the transverse curvature of the mass surface and the mass difference under charge addition. Because the mass equation contains the derivative quadratically, constructing the global branch is part of the theorem rather than a choice of sign convention. The result extends directly to higher dimensions through the known dimensional rescaling.

A separate issue is whether every point of this mass branch represents a smooth extremal black hole. Horizon regularity in dilatonic and more general extremal systems can impose additional restrictions \cite{Poletti1995,Triangular2015,SmoothHorizons2025,AxionSingular2026}. We therefore reconstruct the four-dimensional exterior without assuming horizon smoothness. The geometry is regular for $r>0$ and approaches a finite-area inner endpoint, but generic nonconstant-scalar flows need not extend through that endpoint smoothly. Thus the global mass branch is the correct object for the interaction theorem, while smooth-horizon black holes form a more restricted subset.

The paper is organized as follows. Section~\ref{Sec:MassBranch} establishes the global positive mass branch. Section~\ref{Sec:Interactions} determines the force sign and its relation to mass convexity and charge-combination energetics. Section~\ref{Sec:Extensions} discusses the higher-dimensional extension and the four-dimensional exterior geometry. Section~\ref{Sec:Conclusions} concludes.

\section{The extremal mass branch}\label{Sec:MassBranch}

We work in units $G=1$ and consider the four-dimensional Einstein--two-Maxwell--dilaton theory
\begin{equation}\label{eq:action}
S=\frac{1}{16\pi}\int \mathrm{d}^{4}x\,\sqrt{-g}\left[R-2(\partial\phi)^2-e^{-2a\phi}F_{\mu\nu}F^{\mu\nu}-e^{-2b\phi}\widetilde F_{\mu\nu}\widetilde F^{\mu\nu}\right].
\end{equation}
Related two-vector dilatonic black-hole models have been studied in a variety of settings and conventions \cite{Lu2013,Abishev2017}. We adopt the asymptotically flat normalization of Ref.~\cite{E2MD2026} and restrict to
\begin{equation}\label{eq:coupling-domain}
a>0>b,
\end{equation}
with two strictly positive electric charges $Q$ and $P$. Here $P$ denotes the charge carried by $\widetilde F$ and is not a magnetic charge. Up to interchanging the two gauge fields and reversing the sign of $\phi$, Eq.~\eqref{eq:coupling-domain} fixes the convention for opposite-sign couplings. Throughout this section we consider the static, spherically symmetric, asymptotically flat extremal sector of the two-derivative theory; rotation, a scalar potential, and higher-derivative corrections lie outside the present scope. We choose the reference asymptotic modulus to be $\phi_\infty=0$ and define the ADM mass and scalar charge through
\begin{equation}\label{eq:asymptotic-charges}
g_{tt}=-1+\frac{2M}{r}+O(r^{-2}),\qquad \phi=\frac{\Sigma}{r}+O(r^{-2}).
\end{equation}
This fixes our sign convention for $\Sigma$.

The reduction to a one-dimensional mass equation follows from three structural relations. First, homogeneity of the extremal mass gives
\begin{equation}\label{eq:homogeneity}
M(\lambda Q,\lambda P)=\lambda M(Q,P),\qquad \lambda>0.
\end{equation}
Second, the action is invariant under a constant shift of the dilaton accompanied by
\begin{equation}\label{eq:shift-symmetry}
\phi\rightarrow\phi+\phi_{0},\qquad F\rightarrow e^{a\phi_{0}}F,\qquad \widetilde F\rightarrow e^{b\phi_{0}}\widetilde F,
\end{equation}
under which the conserved charges transform as $Q\rightarrow e^{-a\phi_{0}}Q$ and $P\rightarrow e^{-b\phi_{0}}P$. Following Ref.~\cite{E2MD2026}, we use this symmetry to relate the scalar charge defined in Eq.~\eqref{eq:asymptotic-charges} to derivatives of the extremal mass along the induced charge-rescaling orbit. After setting $\phi_\infty=0$, this gives
\begin{equation}\label{eq:scalar-charge-derivative}
\Sigma=-aQ\,M_Q-bP\,M_P,
\end{equation}
where subscripts on $M$ denote partial derivatives evaluated at $\phi_\infty=0$. Because the dilaton-shift orbit rescales $Q$ and $P$, Eq.~\eqref{eq:scalar-charge-derivative} should not be identified with a variation of $\phi_\infty$ at fixed conserved charges; the latter is the thermodynamic variation entering the first-law discussion of Ref.~\cite{ScalarFirstLaw1996}. Finally, the radial Hamiltonian constraint on the static extremal branch yields \cite{E2MD2026}
\begin{equation}\label{eq:extremality-constraint}
M^{2}+\Sigma^{2}=Q^{2}+P^{2}.
\end{equation}
Equation~\eqref{eq:extremality-constraint} is an extremality relation, not an assumption about the force between two distinct solutions. In Sec.~\ref{Sec:Regularity} we will reconstruct the corresponding exterior directly and recover the same asymptotic charges from the radial flow.

It is convenient to use the charge ratio as the single nontrivial variable. Equation~\eqref{eq:homogeneity} implies $M=P f(q)$ with $q=Q/P$. Defining
\begin{equation}\label{eq:mass-parametrization}
\delta=a-b>0,\qquad q=e^{\delta x},\qquad f(q)=e^{-bx}h(x),
\end{equation}
one finds from Eq.~\eqref{eq:scalar-charge-derivative}
\begin{equation}\label{eq:mass-charge-relations}
Q=P e^{\delta x},\qquad M=P e^{-bx}h(x),\qquad \Sigma=-P e^{-bx}h'(x).
\end{equation}
Substitution into Eq.~\eqref{eq:extremality-constraint} gives the E2MD mass equation
\begin{equation}\label{eq:mass-equation}
h'(x)^{2}+h(x)^{2}=e^{2ax}+e^{2bx},
\end{equation}
which was derived in Ref.~\cite{E2MD2026}. Closely related solution-independent mass--charge relations for Einstein gravity coupled to two Maxwell fields and one dilaton were developed in Ref.~\cite{LuYangLu2025}, and this strategy has recently been extended to extract black-hole thermodynamics without first constructing the corresponding closed-form solutions \cite{YangLuLu2026}. For the present problem, however, Eq.~\eqref{eq:mass-equation} leaves one issue unresolved: because the derivative enters quadratically, the equation does not by itself select a global positive mass branch.

All finite positive charge ratios correspond to $x\in\mathbb{R}$, and the prefactor $P e^{-bx}$ in Eq.~\eqref{eq:mass-charge-relations} is positive. Hence a mass branch that is positive over the entire open charge cone must be represented by a function $h$ that is positive on the full real axis. Introduce
\begin{equation}\label{eq:R-def}
\mathcal R(x)=\sqrt{e^{2ax}+e^{2bx}},\qquad \kappa=\sqrt{-ab},\qquad c=\frac{\sqrt{1-8ab}-1}{2},
\end{equation}
so that $c>0$ and
\begin{equation}\label{eq:c-relation}
c(c+1)=2\kappa^{2}.
\end{equation}
The function $\mathcal R$ diverges at both ends of the real axis and has a unique minimum at
\begin{equation}\label{eq:center}
x_{0}=\frac{1}{2\delta}\log\!\left(-\frac{b}{a}\right),\qquad q_{0}=e^{\delta x_0}=\sqrt{-\frac{b}{a}},\qquad \mathcal R_{0}=\mathcal R(x_{0}).
\end{equation}
The next result fixes the branch used throughout the remainder of the paper.

\begin{theorem}[Global positive mass branch]\label{thm:global-branch}
Equation~\eqref{eq:mass-equation} has exactly one solution $h\in C^{2}(\mathbb{R})$ satisfying $h(x)>0$ for all $x\in\mathbb{R}$. This solution is real analytic and obeys
\begin{equation}\label{eq:center-data}
h(x_{0})=\mathcal R_{0},\qquad h'(x_{0})=0,\qquad h''(x_{0})=c\mathcal R_{0}.
\end{equation}
Moreover,
\begin{equation}\label{eq:branch-monotonicity}
h'(x)<0\quad (x<x_{0}),\qquad h'(x)>0\quad (x>x_{0}),\qquad h(x)<\mathcal R(x)\quad (x\neq x_{0}).
\end{equation}
\end{theorem}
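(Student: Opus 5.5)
The plan is to reduce everything to a study of the first-order problem near and away from the unique minimum $x_0$ of $\mathcal R$. I will first extract a priori structure that any positive $C^2$ solution must have, then prove uniqueness by a singular-point comparison at $x_0$, and finally prove existence and analyticity through a Briot--Bouquet-type local analysis continued globally along a regular flow.

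\emph{Step 1: a priori structure.} Let $h>0$ be a $C^2$ solution of Eq.~\eqref{eq:mass-equation}. Since $h'^2\ge 0$, we have $h\le\mathcal R$, and $\mathcal R-h\ge0$. At any zero $x_1$ of $h'$, the equation forces $h(x_1)=\mathcal R(x_1)$. Then $\mathcal R-h$ attains its minimum value $0$ at $x_1$, so $\mathcal R'(x_1)=h'(x_1)=0$, and hence $x_1=x_0$. Thus $h'$ can vanish only at $x_0$.

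Next I show that $h'$ must vanish somewhere. Suppose $h'>0$ on all of $\mathbb R$. Then $h$ decreases as $x\to-\infty$ but stays positive, so it is bounded there. At the same time $h'=\sqrt{\mathcal R^2-h^2}\to\infty$ because $\mathcal R\sim e^{bx}\to\infty$. These two facts are incompatible: large positive slope while moving left would drive $h$ to $-\infty$. The case $h'<0$ everywhere is excluded symmetrically as $x\to+\infty$. Combined with continuity, this gives the sign pattern~\eqref{eq:branch-monotonicity}, together with $h(x_0)=\mathcal R_0$, $h'(x_0)=0$, and $h<\mathcal R$ for $x\neq x_0$.

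\emph{Step 2: the value of $h''(x_0)$.} Put $s=h''(x_0)$ and $y=x-x_0$, and expand both sides of Eq.~\eqref{eq:mass-equation} to order $y^2$. Using $a e^{2ax_0}=-b e^{2bx_0}$, one finds
\[
(\mathcal R^2)''(x_0)=-4ab\,\mathcal R_0^2 .
\]
Matching the $y^2$ coefficients then gives
\[
s^2+\mathcal R_0 s=2\kappa^2\mathcal R_0^2 .
\]
By Eq.~\eqref{eq:c-relation}, the roots are $s=c\mathcal R_0$ and $s=-(c+1)\mathcal R_0$. Since $x_0$ is a minimum of $h$, we need $s\ge0$, which selects $s=c\mathcal R_0$ and gives Eq.~\eqref{eq:center-data}.

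\emph{Step 3: uniqueness.} On $(x_0,\infty)$ every admissible solution satisfies $h'=\sqrt{\mathcal R^2-h^2}$ with $h(x_0)=\mathcal R_0$. This ODE is Lipschitz wherever $h<\mathcal R$, so the only issue is non-uniqueness emanating from the singular point $x_0$. Take two solutions and set $d=h_1-h_2$. Subtracting the squared equations gives
\[
d'=-\frac{(h_1+h_2)\,d}{h_1'+h_2'} .
\]
By Step 2, $h_1'+h_2'\approx 2c\mathcal R_0 y$ and $h_1+h_2\approx 2\mathcal R_0$ as $y\to0^+$, so $d'\approx -d/(c\,y)$. Integrating, $|d|\gtrsim C\,y^{-1/c}$ unless $d\equiv 0$. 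This contradicts $d\to0$ as $y\to0^+$, so $d\equiv0$ near $x_0$. Lipschitz uniqueness then propagates this to all of $(x_0,\infty)$, and the side $x<x_0$ is handled identically.

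\emph{Step 4: existence and analyticity (expected main obstacle).} Write $h=\mathcal R\cos\theta$ and $h'=\mathcal R\sin\theta$. The equation becomes
\[
\theta'=\rho(x)\cot\theta-1,\qquad \rho=\mathcal R'/\mathcal R .
\]
Here $\rho$ is analytic, increases from $b$ to $a$, and vanishes only at $x_0$. Near $(x_0,0)$ I would set $\theta=y\,u$ to reach a Briot--Bouquet system of the form $y\,u'=F(y,u)$ with $F$ analytic. Its fixed point is $u=c$, which is the analytic counterpart of the root selected in Step 2. The linearization eigenvalue there should be $-(1+1/c)<0$, which is not a positive integer. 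The Briot--Bouquet theorem then yields a unique analytic local solution through $x_0$.

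It remains to continue this local solution globally. For $x>x_0$ we have $\rho>0$. The trajectory cannot reach $\theta=0$, since $\theta'\to+\infty$ there, and it cannot reach $\theta=\pi/2$, since $\theta'=-1$ there. Because $\theta$ stays bounded, the solution extends to all of $(x_0,\infty)$ and remains analytic, as the right-hand side is analytic on $0<\theta<\pi/2$. The mirror argument with $\rho<0$ handles $x<x_0$.

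The delicate points are checking the Briot--Bouquet eigenvalue computation and ensuring that the $C^2$ solution from the uniqueness argument coincides with this analytic one. The latter follows because both pass through $x_0$ with the same second-order data, and Step 3 shows that such a solution is unique.
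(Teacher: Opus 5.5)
Your proof is correct. Its existence half follows the paper closely, but the uniqueness half takes a different route.

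The paper proves existence by an explicit contraction for $tv'+\Lambda v=\mathcal G(t,v)$. It then proves uniqueness in four steps: take an arbitrary positive $C^2$ solution $g$; force $g''(x_0)=c\mathcal R_0$, excluding the root $-1-c$ because it drives $g$ through zero; pass to the angle variable; and invoke uniqueness of bounded solutions of the singular integral equation with $\vartheta/t\to c$.

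You instead compare two admissible solutions directly. The difference $d=h_1-h_2$ satisfies the linear equation $d'=-p\,d$ with $p=(h_1+h_2)/(h_1'+h_2')>0$ for $x>x_0$. Since $h_i'=O(x-x_0)$, the coefficient is non-integrable at $x_0$, so $d(x)=d(x_1)\exp\int_x^{x_1}p$ can tend to zero only if $d(x_1)=0$. This is more elementary and avoids the angle variable and the singular integral equation for uniqueness. In fact it needs only $h_i(x_0)=\mathcal R_0$, the sign pattern of $h_i'$, and the $C^2$ bound $h_i'=O(x-x_0)$, not the value of $h''(x_0)$. The paper's route, in exchange, obtains existence, analyticity at $x_0$, and local uniqueness from one self-contained fixed-point argument instead of citing Briot--Bouquet. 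Your $\theta$-barriers at $\theta=0$ and $\theta=\pm\pi/2$ play the role of the paper's first-zero argument for $D=\mathcal R^2-h^2$.

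Two small repairs are needed. First, the linearization is $F_u(0,c)=-2\kappa^2/c^2-1=-(2+1/c)$, not $-(1+1/c)$; you dropped the $-u$ term. The corrected value matches the paper's $\Lambda$ and your own $y^{-1/c}$ growth in Step 3, and negativity is all Briot--Bouquet needs, so nothing downstream changes. Second, in Step 1 the sign pattern requires running your blow-up argument on each half-line separately. For example, $h'>0$ on $(-\infty,x_0)$ alone already gives $h$ bounded with $h'\to\infty$ as $x\to-\infty$, and the argument then goes through verbatim.
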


\begin{proof}
We first construct the branch near $x_{0}$. Write
\begin{equation}\label{eq:polar-mass}
h=\mathcal R\cos\theta,\qquad h'=\mathcal R\sin\theta,
\end{equation}
and define
\begin{equation}\label{eq:rho-def}
\rho=\frac{\mathcal R'}{\mathcal R}=\frac{a e^{2ax}+b e^{2bx}}{e^{2ax}+e^{2bx}}.
\end{equation}
Away from $\theta=0$, differentiating the first relation in Eq.~\eqref{eq:polar-mass} and comparing it with the second gives
\begin{equation}\label{eq:theta-ode}
\theta'=\rho\cot\theta-1.
\end{equation}
At $x=x_{0}$ one has $\rho(x_{0})=0$ and $\rho'(x_{0})=2\kappa^{2}$. Setting $t=x-x_{0}$, write $\rho=t\eta(t)$, where $\eta$ is analytic and $\eta(0)=2\kappa^{2}$. With $\theta=t u(t)$, Eq.~\eqref{eq:theta-ode} becomes
\begin{equation}\label{eq:singular-u}
t u'=\frac{\eta(t)}{u}\bigl[t u\cot(tu)\bigr]-1-u.
\end{equation}
Since $z\cot z$ is analytic at the origin, any solution with a finite limit $u(0)$ must satisfy
\begin{equation}\label{eq:u0-roots}
u(0)^2+u(0)=2\kappa^2.
\end{equation}
The two roots are $c$ and $-1-c$. We first take the positive root $u(0)=c$. Setting $v=u-c$ gives
\begin{equation}\label{eq:v-equation}
tv'+\Lambda v=\mathcal G(t,v),\qquad
\Lambda=2+\frac{1}{c}>0,
\end{equation}
where $\mathcal G$ is analytic near $(0,0)$ and satisfies
$\mathcal G(0,0)=\mathcal G_v(0,0)=0$. Equivalently,
\begin{equation}\label{eq:v-integral}
v(t)=\int_{0}^{1}s^{\Lambda-1}\mathcal G(st,v(st))\,\mathrm{d}s.
\end{equation}

We spell out the local fixed-point step because the equation is singular at $t=0$. There are $r_0,R_0,C>0$ such that, for $|t|\le r_0$ and $|v|\le R_0$,
\begin{equation}\label{eq:G-local-bounds}
|\mathcal G(t,v)|
\le
C\bigl(|t|+|t||v|+|v|^2\bigr),
\qquad
|\mathcal G_v(t,v)|
\le
C\bigl(|t|+|v|\bigr).
\end{equation}
Choose $M>2C/\Lambda$ and then $r>0$ so small that
$r\le r_0$, $Mr\le R_0$, and
\begin{equation}
\frac{C(1+M)r}{\Lambda}<\frac12.
\end{equation}
On the closed ball
\begin{equation}
\mathcal B_{r,M}
=
\left\{
v\in H^\infty(D_r):
v(0)=0,\ 
\|v\|_\infty\le Mr
\right\},
\qquad
D_r=\{t\in\mathbb C:|t|<r\},
\end{equation}
Eq.~\eqref{eq:G-local-bounds} gives
\begin{equation}
\|\mathcal T v\|_\infty
\le
\frac{C}{\Lambda}
\left(r+Mr^2+M^2r^2\right)
\le Mr
\end{equation}
after decreasing $r$ if necessary. Moreover,
\begin{equation}
\|\mathcal T v-\mathcal T w\|_\infty
\le
\frac{C(1+M)r}{\Lambda}
\|v-w\|_\infty
<
\frac12\|v-w\|_\infty.
\end{equation}
Thus $\mathcal T$ is a contraction and determines a unique holomorphic fixed point; its restriction to the real axis is the desired real-analytic solution.

The same fixed point is unique among bounded real solutions with $v(t)\to0$. Indeed, any such solution of Eq.~\eqref{eq:v-equation}, $C^1$ for $t\neq0$, satisfies the integral equation \eqref{eq:v-integral}. For $t>0$, multiplying Eq.~\eqref{eq:v-equation} by $t^{\Lambda-1}$ and integrating from $0$ to $t$ gives
\begin{equation}
t^\Lambda v(t)
=
\int_0^t
\tau^{\Lambda-1}
\mathcal G(\tau,v(\tau))
\,\mathrm d\tau,
\end{equation}
because boundedness implies $t^\Lambda v(t)\to0$ as $t\to0^+$. Dividing by $t^\Lambda$ and setting $\tau=st$ yields Eq.~\eqref{eq:v-integral}. For $t<0$ the same argument applies to $\widetilde v(s)=v(-s)$ on $s>0$. After shrinking the interval, Eq.~\eqref{eq:v-integral} and $\mathcal G(t,0)=O(t)$ therefore give
\begin{equation}
m(t)
\le
C_1|t|+\frac12 m(t),
\qquad
m(t)=\sup_{|\tau|\le|t|}|v(\tau)|,
\end{equation}
and hence $v(t)=O(t)$. The solution then belongs to a ball of the form $\mathcal B_{r,M}$, where contraction uniqueness applies. Reconstructing $h=\mathcal R\cos\theta$ gives Eq.~\eqref{eq:center-data}, with $h'>0$ immediately to the right of $x_{0}$ and $h'<0$ immediately to the left.

We next continue this local solution over the full real axis. Let
\begin{equation}\label{eq:D-def}
D(x)=\mathcal R(x)^{2}-h(x)^{2}=h'(x)^{2}.
\end{equation}
To the right of $x_{0}$ we use $h'=\sqrt{D}$. If $D$ had a first zero at some finite $x_{*}>x_{0}$, then $D'(x_{*})\leq0$ because $D>0$ immediately to its left. At such a point $h'(x_{*})=0$, whereas
\begin{equation}\label{eq:D-boundary}
D'(x_{*})=2\mathcal R(x_{*})\mathcal R'(x_{*})>0,
\end{equation}
which is impossible. On the left we use $h'=-\sqrt{D}$; a first zero $x_{*}<x_{0}$ approached from the right would require $D'(x_{*})\geq0$, while $2\mathcal R\mathcal R'<0$ there. Thus $D$ remains positive for every finite $x\neq x_{0}$. On any compact interval, $|h|$ and $|h'|$ are bounded by $\mathcal R$, while the monotonicity just established gives $h\geq\mathcal R_{0}$. Hence neither a finite blow-up nor a zero of $h$ can occur, and standard continuation of the nonsingular square-root equation extends the solution to all $x\in\mathbb{R}$. It is analytic away from $x_{0}$, and the local construction already gives analyticity at $x_{0}$.

It remains to prove uniqueness in the stated class. Let $g>0$ be any $C^{2}(\mathbb{R})$ solution of Eq.~\eqref{eq:mass-equation}. Differentiating $g^{2}+g'^{2}=\mathcal R^{2}$ gives
\begin{equation}\label{eq:stationary-identity}
g'(g+g'')=\mathcal R\mathcal R'.
\end{equation}
Therefore any zero of $g'$ must occur at $x_{0}$. Such a zero must exist. If $g'>0$ everywhere, then $g$ is bounded as $x\to-\infty$, whereas $\mathcal R(x)\to\infty$; Eq.~\eqref{eq:mass-equation} then gives $g'\geq\mathcal R/2$ sufficiently far to the left. For a fixed $x_{1}$ in that region,
\begin{equation}\label{eq:left-contradiction}
0<g(x_{1})-g(x)=\int_{x}^{x_{1}}g'(y)\,\mathrm{d}y\geq\frac{1}{2}\int_{x}^{x_{1}}\mathcal R(y)\,\mathrm{d}y,
\end{equation}
and the right-hand side diverges as $x\to-\infty$, while the left-hand side is bounded above by $g(x_{1})$. This is a contradiction. The case $g'<0$ everywhere is excluded analogously as $x\to+\infty$. Hence $g'(x_{0})=0$, and positivity gives $g(x_{0})=\mathcal R_{0}$.

Write $g''(x_{0})=\gamma\mathcal R_{0}$. Since $(\mathcal R^{2})''(x_{0})=4\kappa^{2}\mathcal R_{0}^{2}$, comparison of the quadratic terms in $g^{2}+g'^{2}=\mathcal R^{2}$ yields
\begin{equation}\label{eq:gamma-equation}
\gamma^{2}+\gamma=2\kappa^{2}.
\end{equation}
Thus $\gamma=c$ or $\gamma=-1-c$. The second possibility is incompatible with global positivity. It gives $g'<0$ immediately to the right of $x_{0}$, and Eq.~\eqref{eq:stationary-identity} shows that $g'$ can never vanish again. Consequently $g(x)\leq\mathcal R_{0}$ for $x>x_{0}$, while $\mathcal R(x)\to\infty$; eventually Eq.~\eqref{eq:mass-equation} implies $g'\leq-\mathcal R/2$, and integration forces $g$ to cross zero at a finite value of $x$. Hence $\gamma=c$.

Since $g>0$ near $x_{0}$, we may introduce a continuous angle $\vartheta\in(-\pi/2,\pi/2)$ by
\begin{equation}\label{eq:candidate-angle}
g=\mathcal R\cos\vartheta,\qquad g'=\mathcal R\sin\vartheta.
\end{equation}
The expansion above implies $\vartheta(t)/t\to c$ as $t\to0$. Thus $u_g(t)=\vartheta(t)/t$ is bounded and tends to $c$, so the local uniqueness statement following from Eq.~\eqref{eq:v-integral} applies to $g$. It follows that $g$ coincides with the constructed branch in a neighbourhood of $x_{0}$. Ordinary uniqueness for the nonsingular square-root equation then extends this equality to both sides. This proves the theorem.
\end{proof}

By Eq.~\eqref{eq:mass-charge-relations}, $x_{0}$ is also the unique point on the global branch at which the scalar charge vanishes. Theorem~\ref{thm:global-branch} is a statement about the extremal mass function: positivity and $C^{2}$ regularity over the full open charge cone select a single analytic branch of Eq.~\eqref{eq:mass-equation}. No boundary condition at the single-charge limits $q\to0$ or $q\to\infty$ is required. The theorem does not assert uniqueness of arbitrary spacetime solutions, nor does it assume that the inner endpoint associated with every coupling is a smooth extremal horizon. That separate issue will be addressed in Sec.~\ref{Sec:Regularity}. With the mass branch fixed, we can now compare different charge ratios without any further branch choice.

\section{Interactions along the mass branch}\label{Sec:Interactions}

The sign of the leading long-range force between two extremal states is determined by the coefficient \cite{E2MD2026,Heidenreich2020,TodaForce2022,EMD2023}
\begin{equation}\label{eq:force-coefficient}
\mathcal F_{12}=Q_1Q_2+P_1P_2-M_1M_2-\Sigma_1\Sigma_2,
\end{equation}
where $\mathcal F_{12}>0$ corresponds to repulsion. Using Eq.~\eqref{eq:mass-charge-relations}, this becomes
\begin{equation}\label{eq:force-kernel-relation}
\mathcal F_{12}=-P_1P_2e^{-b(x_1+x_2)}\mathcal K_{12},
\end{equation}
with
\begin{equation}\label{eq:force-kernel}
\mathcal K_{12}=h_1h_2+h'_1h'_2-e^{a(x_1+x_2)}-e^{b(x_1+x_2)},
\end{equation}
where $h_i=h(x_i)$. If $x_1=x_2$, the two charge vectors lie on the same ray of the positive charge cone and Eq.~\eqref{eq:mass-equation} gives $\mathcal K_{12}=0$. The nontrivial problem is therefore to determine the sign for different charge ratios.

\subsection{The global force sign}\label{Sec:Force}

The force kernel admits a simple geometric interpretation. In addition to the mass angle $\theta$ introduced in Eq.~\eqref{eq:polar-mass}, define the bare-charge angle $\alpha$ by
\begin{equation}\label{eq:bare-angle}
e^{bx}=\mathcal R\cos\alpha,\qquad e^{ax}=\mathcal R\sin\alpha,\qquad \alpha=\arctan e^{\delta x}\in(0,\pi/2).
\end{equation}
Since $h>0$, the angle $\theta$ is globally defined in $(-\pi/2,\pi/2)$. Let
\begin{equation}\label{eq:angular-variables}
k=\alpha'=\delta\sin\alpha\cos\alpha>0,\qquad p=\frac{\mathrm d\theta}{\mathrm d\alpha}=\frac{\theta'}{k}.
\end{equation}
Here and below, a prime on $p$ denotes differentiation with respect to $x$. With $\rho=\mathcal R'/\mathcal R$ as in Eq.~\eqref{eq:rho-def},
\begin{equation}\label{eq:rho-k-identities}
\rho'=2k^2,\qquad k^2=(a-\rho)(\rho-b),\qquad \frac{k'}{k}=a+b-2\rho.
\end{equation}
At $x=x_0$, where $k=\kappa$ and $\theta'=c$,
\begin{equation}\label{eq:p-center}
p(x_0)=\frac{c}{\kappa},\qquad p(x_0)^2=\frac{2c}{c+1}.
\end{equation}
It follows that $p(x_0)<1$, $p(x_0)=1$, or $p(x_0)>1$ according as $ab>-1$, $ab=-1$, or $ab<-1$.

The angle equation \eqref{eq:theta-ode} may be written as
\begin{equation}\label{eq:p-angle-equation}
1+kp=\rho\cot\theta.
\end{equation}
Differentiating this relation and using Eq.~\eqref{eq:rho-k-identities} gives, at a possible crossing away from $x_0$,
\begin{equation}\label{eq:p-barriers}
\left.p'\right|_{p=0}=\frac{2k}{\rho},\qquad
\left.p'\right|_{p=1}=-\frac{1+ab}{\rho}.
\end{equation}
These two identities determine the global behavior of $p$. Since $p(x_0)>0$, a first zero to the right of $x_0$ would require $p'\leq0$, whereas Eq.~\eqref{eq:p-barriers} gives $p'>0$; a first zero to the left would require $p'\geq0$, whereas $p'<0$. Hence
\begin{equation}\label{eq:p-positive}
p(x)>0
\end{equation}
throughout the real axis.

Suppose first that $ab>-1$, so that $p(x_0)<1$. A first contact with $p=1$ on the right would require $p'\geq0$, while Eq.~\eqref{eq:p-barriers} gives $p'<0$. On the left the required and actual signs are again opposite. Thus $p<1$ everywhere. When $ab<-1$, the central value lies above one and the same argument, with all relevant signs reversed, gives $p>1$ everywhere. At the critical coupling $ab=-1$, with $b=-1/a$, the known exact positive branch \cite{E2MD2026} satisfies
\begin{equation}\label{eq:critical-orthogonal-map}
\begin{pmatrix}
h\\
h'
\end{pmatrix}
=
\frac{1}{\sqrt{1+a^2}}
\begin{pmatrix}
a&1\\
-1&a
\end{pmatrix}
\begin{pmatrix}
e^{-x/a}\\
e^{ax}
\end{pmatrix}.
\end{equation}
The mass vector is therefore a constant rotation of the bare-charge vector, and Theorem~\ref{thm:global-branch} identifies this solution with the global positive branch. Consequently,
\begin{equation}\label{eq:p-global}
0<p<1\quad(ab>-1),\qquad p=1\quad(ab=-1),\qquad p>1\quad(ab<-1).
\end{equation}

\begin{theorem}[Long-range force sign]\label{thm:force-sign}
For two states with $Q_i,P_i>0$ and finite charge ratios, $\mathcal F_{12}=0$ if the charge vectors are proportional or if $ab=-1$. If the charge vectors are nonproportional and $ab\neq-1$, then
\begin{equation}\label{eq:physical-force-sign}
\operatorname{sgn}\mathcal F_{12}=-\operatorname{sgn}(1+ab).
\end{equation}
\end{theorem}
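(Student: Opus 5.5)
The plan is to rewrite the force kernel $\mathcal K_{12}$ of Eq.~\eqref{eq:force-kernel} entirely in terms of the two angles $\theta$ and $\alpha$, and then read off its sign from the global bounds on $p=\mathrm d\theta/\mathrm d\alpha$ in Eq.~\eqref{eq:p-global}.

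\emph{Angular form of the kernel.} Insert the mass angle of Eq.~\eqref{eq:polar-mass}, $h_i=\mathcal R_i\cos\theta_i$ and $h'_i=\mathcal R_i\sin\theta_i$. Insert the bare-charge angle of Eq.~\eqref{eq:bare-angle}, $e^{bx_i}=\mathcal R_i\cos\alpha_i$ and $e^{ax_i}=\mathcal R_i\sin\alpha_i$. Then
\begin{equation}
h_1h_2+h'_1h'_2=\mathcal R_1\mathcal R_2\cos(\theta_1-\theta_2),\qquad e^{a(x_1+x_2)}+e^{b(x_1+x_2)}=\mathcal R_1\mathcal R_2\cos(\alpha_1-\alpha_2),
\end{equation}
so that
\begin{equation}
\mathcal K_{12}=\mathcal R_1\mathcal R_2\bigl[\cos(\theta_1-\theta_2)-\cos(\alpha_1-\alpha_2)\bigr].
\end{equation}
By Eq.~\eqref{eq:force-kernel-relation}, and since $P_iP_je^{-b(x_1+x_2)}>0$ and $\mathcal R_i>0$, we get $\operatorname{sgn}\mathcal F_{12}=-\operatorname{sgn}\mathcal K_{12}$. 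Proportional charge vectors have $x_1=x_2$, which gives $\mathcal K_{12}=0$ as already noted.

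\emph{Comparing the angle differences.} Take $x_1\neq x_2$, without loss of generality $x_1<x_2$. Since $\alpha=\arctan e^{\delta x}$ is strictly increasing, $\Delta\alpha=\alpha_2-\alpha_1\in(0,\pi/2)$. Moreover
\begin{equation}
\Delta\theta=\theta_2-\theta_1=\int_{\alpha_1}^{\alpha_2}p\,\mathrm d\alpha .
\end{equation}
Theorem~\ref{thm:global-branch} makes $\theta$ a smooth function on $\mathbb R$, and $p>0$ by Eq.~\eqref{eq:p-positive}, so $\Delta\theta>0$. Equation~\eqref{eq:p-global} then gives three cases:
\begin{itemize}
\item for $ab>-1$, $0<\Delta\theta<\Delta\alpha$;
\item for $ab=-1$, $\Delta\theta=\Delta\alpha$;
\item for $ab<-1$, $\Delta\theta>\Delta\alpha$.
\end{itemize}
The strict inequalities hold because $p-1$ has a fixed strict sign on the whole interval of integration.

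\emph{Monotonicity of cosine.} This is the only step needing care: we must make sure both differences lie in $[0,\pi]$, where cosine is strictly decreasing. For $\Delta\alpha$ this is immediate. For $\Delta\theta$ it follows from global positivity $h>0$, which confines $\theta$ to $(-\pi/2,\pi/2)$ and hence gives $0<\Delta\theta<\pi$. With this in hand:
\begin{itemize}
\item if $ab>-1$, then $\cos\Delta\theta>\cos\Delta\alpha$, so $\mathcal K_{12}>0$ and $\mathcal F_{12}<0$ (attraction);
\item if $ab<-1$, then $\cos\Delta\theta<\cos\Delta\alpha$, so $\mathcal K_{12}<0$ and $\mathcal F_{12}>0$ (repulsion);
\item if $ab=-1$, then $\mathcal K_{12}=0$, consistent with the rigid rotation of Eq.~\eqref{eq:critical-orthogonal-map}.
\end{itemize}
In every case $\operatorname{sgn}\mathcal F_{12}=-\operatorname{sgn}(1+ab)$, which is Eq.~\eqref{eq:physical-force-sign}.
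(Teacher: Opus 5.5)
Your proposal is correct and is essentially the paper's own proof. You rewrite $\mathcal K_{12}/(\mathcal R_1\mathcal R_2)$ as $\cos(\theta_2-\theta_1)-\cos(\alpha_2-\alpha_1)$, compare $\theta_2-\theta_1=\int_{\alpha_1}^{\alpha_2}p\,\mathrm d\alpha$ with $\alpha_2-\alpha_1$ via Eq.~\eqref{eq:p-global}, and conclude from strict monotonicity of cosine on $(0,\pi)$, after placing both differences in that interval using $h>0$ and $p>0$, exactly as the paper does.
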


\begin{proof}
Assume $x_1<x_2$ and define
\begin{equation}\label{eq:angle-differences}
A=\alpha_2-\alpha_1,\qquad T=\theta_2-\theta_1.
\end{equation}
The vectors $(h,h')$ and $(e^{bx},e^{ax})$ have the same length $\mathcal R$, so
\begin{equation}\label{eq:kernel-angle}
\frac{\mathcal K_{12}}{\mathcal R_1\mathcal R_2}=\cos T-\cos A.
\end{equation}
Here $0<A<\pi/2$, while $p>0$ and $\theta\in(-\pi/2,\pi/2)$ imply $0<T<\pi$. Moreover,
\begin{equation}\label{eq:T-integral}
T=\int_{\alpha_1}^{\alpha_2}p(\alpha)\,\mathrm d\alpha.
\end{equation}
Equation~\eqref{eq:p-global} gives $T<A$ for $ab>-1$, $T=A$ for $ab=-1$, and $T>A$ for $ab<-1$. Since $\cos z$ is strictly decreasing on $(0,\pi)$,
\begin{equation}\label{eq:kernel-sign}
\operatorname{sgn}\mathcal K_{12}=\operatorname{sgn}(1+ab)
\end{equation}
for $x_1\neq x_2$ away from criticality. Equation~\eqref{eq:force-kernel-relation} then proves Eq.~\eqref{eq:physical-force-sign}.
\end{proof}

Thus any two nonproportional positive charge vectors attract for $ab>-1$ and repel for $ab<-1$. The statement concerns the leading long-range interaction; it does not assert an all-distance two-body force law.

\subsection{Mass convexity and binding energy}\label{Sec:Binding}

The same transition is encoded in the geometry of the extremal mass surface. The relation between force, mass curvature, and binding energy was already observed in EMD theory \cite{EMD2023}; it is not universal once the matter sector is enlarged \cite{EMDA2025}. Define
\begin{equation}\label{eq:C-def}
\mathcal C=h''-(a+b)h'+ab\,h.
\end{equation}
Since
\begin{equation}
(\partial_x-2a)(\partial_x-2b)\mathcal R^2=0,\qquad \mathcal R^2=h^2+h'^2,
\end{equation}
direct expansion gives the identity
\begin{equation}\label{eq:C-identity}
0=h'\mathcal C'+\mathcal C^2+\bigl[(a+b)h'+(1-2ab)h\bigr]\mathcal C+(1+ab)(h'-ah)(h'-bh).
\end{equation}
No division by $h'$ is involved.

To determine the sign of the last factor, note that $\theta'=kp>0$. Equation~\eqref{eq:theta-ode} therefore gives
\begin{equation}\label{eq:w-rho}
\frac{h'}{h}=\tan\theta=\frac{\rho}{1+\theta'}.
\end{equation}
For $x>x_0$ one has $0<\rho<a$, while for $x<x_0$ one has $b<\rho<0$. Hence, by continuity also at $x_0$,
\begin{equation}\label{eq:log-slope-bound}
b<\frac{h'}{h}<a.
\end{equation}
In particular,
\begin{equation}\label{eq:factor-negative}
(h'-ah)(h'-bh)<0.
\end{equation}

At the center,
\begin{equation}\label{eq:C-center}
\frac{\mathcal C(x_0)}{\mathcal R_0}=c+ab=\frac{c(1-c)}{2},\qquad
1+ab=\frac{(1-c)(c+2)}{2},
\end{equation}
so $\mathcal C(x_0)$ has the sign of $1+ab$. At a zero of $\mathcal C$ away from $x_0$, Eq.~\eqref{eq:C-identity} reduces to
\begin{equation}\label{eq:C-crossing}
\mathcal C'=-\frac{(1+ab)(h'-ah)(h'-bh)}{h'}.
\end{equation}
If $1+ab>0$, a first zero to the right of $x_0$ would require $\mathcal C'\leq0$, whereas Eqs.~\eqref{eq:factor-negative} and \eqref{eq:C-crossing} give $\mathcal C'>0$; on the left the two signs are again incompatible. If $1+ab<0$, the same argument with reversed signs excludes a zero. At $ab=-1$, the critical solution \eqref{eq:critical-orthogonal-map} gives $\mathcal C\equiv0$. Therefore
\begin{equation}\label{eq:C-sign}
\operatorname{sgn}\mathcal C=\operatorname{sgn}(1+ab)
\end{equation}
throughout the finite charge-ratio branch away from criticality.

The quantity $\mathcal C$ has a direct interpretation in charge space. From $M=Pf(q)$, $q=e^{\delta x}$ and $f=e^{-bx}h$,
\begin{equation}\label{eq:f-second}
f''(q)=\frac{e^{-bx}}{\delta^2q^2}\,\mathcal C.
\end{equation}
The Hessian with respect to $(Q,P)$ is consequently
\begin{equation}\label{eq:mass-hessian}
\nabla^2M=\frac{f''(q)}{P}
\begin{pmatrix}
1&-q\\
-q&q^2
\end{pmatrix},
\qquad
\xi^{\mathsf T}\nabla^2M\,\xi=\frac{f''(q)}{P}(\xi_Q-q\xi_P)^2.
\end{equation}
Thus the extremal mass is convex for $ab>-1$ and concave for $ab<-1$. Away from criticality, the only null direction of the Hessian is the homogeneous scaling direction $(Q,P)$. At $ab=-1$ the mass becomes linear,
\begin{equation}\label{eq:critical-linear-mass}
M(Q,P)=\frac{Q+aP}{\sqrt{1+a^2}}.
\end{equation}

Following the convention of Ref.~\cite{EMD2023}, define the mass difference
\begin{equation}\label{eq:DeltaM-def}
\Delta M=M(Q_1+Q_2,P_1+P_2)-M(Q_1,P_1)-M(Q_2,P_2),
\end{equation}
which was referred to there as the binding energy. If one instead defines the binding energy with the opposite sign, all statements about its sign below are reversed. Let
\begin{equation}\label{eq:binding-variables}
P_T=P_1+P_2,\qquad \lambda=\frac{P_1}{P_T},\qquad q_i=\frac{Q_i}{P_i},\qquad \bar q=\lambda q_1+(1-\lambda)q_2.
\end{equation}
For $q_1<q_2$, Taylor's formula with integral remainder gives
\begin{equation}\label{eq:DeltaM-integral}
-\frac{\Delta M}{P_T}
=
\lambda\int_{q_1}^{\bar q}(u-q_1)f''(u)\,\mathrm du
+
(1-\lambda)\int_{\bar q}^{q_2}(q_2-u)f''(u)\,\mathrm du.
\end{equation}
The weights are strictly positive in the interiors of the two intervals. Equations~\eqref{eq:C-sign} and \eqref{eq:f-second} therefore imply, for nonproportional positive charges away from criticality,
\begin{equation}\label{eq:master-sign-relation}
\boxed{
\operatorname{sgn}\Delta M
=
\operatorname{sgn}\mathcal F_{12}
=
-\operatorname{sgn}(1+ab).
}
\end{equation}
The equality $\Delta M=0$ occurs precisely when the two charge vectors are proportional or when $ab=-1$.

The same conclusion holds for any finite collection of positive charge vectors. Writing
\begin{equation}\label{eq:finite-partition}
\Delta M
=
P_T\left[
f\!\left(\sum_i\lambda_iq_i\right)
-\sum_i\lambda_i f(q_i)
\right],
\qquad
P_T=\sum_iP_i,\qquad
\lambda_i=\frac{P_i}{P_T},
\end{equation}
strict convexity or concavity transverse to the homogeneous direction gives the same sign as in Eq.~\eqref{eq:master-sign-relation}, with equality away from criticality only when all $q_i$ coincide. This mass comparison is an energetic statement within the extremal family; by itself it does not determine a dynamical fragmentation or merger process, and entropy supplies a separate criterion in related black-hole systems \cite{Geng2019,CveticGibbonsLuPope2018}. The long-range force, the transverse curvature of the mass surface, and the mass difference for every positive-charge partition therefore change sign at the same coupling.

\section{Extensions and spacetime interpretation}\label{Sec:Extensions}

The sign relations derived above are properties of the extremal mass branch. Two further questions help delimit their physical scope. First, the known dimensional rescaling extends the argument to $D>4$ without repeating the proof. Second, in four dimensions the same mass branch can be reconstructed as a radial exterior, allowing us to distinguish finite-area extremal behavior from genuine smooth-horizon regularity.

\subsection{Higher dimensions}\label{Sec:HigherD}

For $D\geq4$, the E2MD mass equation takes the form \cite{E2MD2026}
\begin{equation}\label{eq:D-mass-equation}
\frac{(D-2)^2}{4}\left(\frac{\mathrm dh_D}{\mathrm dx}\right)^2+\frac{(D-2)(D-3)}{2}h_D^2=e^{2ax}+e^{2bx}.
\end{equation}
Introduce
\begin{equation}\label{eq:D-rescaling}
A_D^2=\frac{2}{(D-2)(D-3)},\qquad B_D^2=\frac{D-2}{2(D-3)},
\end{equation}
and set
\begin{equation}\label{eq:D-hatted}
h_D=A_D\widehat h,\qquad x=B_D\widehat x,\qquad \widehat a=B_Da,\qquad \widehat b=B_Db.
\end{equation}
Equation~\eqref{eq:D-mass-equation} then reduces to
\begin{equation}\label{eq:D-canonical}
\widehat h_{\widehat x}^{\,2}+\widehat h^2=e^{2\widehat a\widehat x}+e^{2\widehat b\widehat x},
\end{equation}
which is precisely the canonical equation studied in Secs.~\ref{Sec:MassBranch} and \ref{Sec:Interactions}. The physical charge ratio is unchanged,
\begin{equation}
q=e^{(\widehat a-\widehat b)\widehat x},
\end{equation}
and the dimensional normalization of both the mass and the leading force introduces only positive factors. The global sign relations therefore carry over directly:
\begin{equation}\label{eq:D-master-sign}
\boxed{
\operatorname{sgn}\mathcal F_{12}^{(D)}
=
\operatorname{sgn}\Delta M_D
=
-\operatorname{sgn}\left[
ab+\frac{2(D-3)}{D-2}
\right]
}
\end{equation}
for nonproportional positive charge vectors away from
\begin{equation}\label{eq:D-critical}
ab=-\frac{2(D-3)}{D-2}.
\end{equation}
The critical surface and the rescaling \eqref{eq:D-rescaling} are known \cite{E2MD2026}; what is added here is that the global force and mass-curvature arguments of Sec.~\ref{Sec:Interactions} apply to the rescaled equation without further assumptions. This dimensional map concerns the mass relation and does not, by itself, imply smoothness of the corresponding higher-dimensional extremal horizons.

\subsection{Exterior geometry and horizon regularity}\label{Sec:Regularity}

We now return to four dimensions. First-order formulations of static extremal black holes are well established \cite{Perz2009,Trigiante2012}; in the present setting they provide a direct reconstruction of the exterior associated with the global mass branch. This is an exterior construction: no regularity assumption at $r=0$ is imposed at this stage. Consider the isotropic ansatz
\begin{equation}\label{eq:radial-metric}
\mathrm ds^2=-e^{2U}\mathrm dt^2+e^{-2U}\left(\mathrm dr^2+r^2\mathrm d\Omega_2^2\right),\qquad \tau=\frac{1}{r},
\end{equation}
with $U,\phi\to0$ as $\tau\to0$. The Maxwell equations integrate to
\begin{equation}\label{eq:radial-maxwell}
F_{tr}=\frac{Q}{r^2}e^{2U+2a\phi},\qquad \widetilde F_{tr}=\frac{P}{r^2}e^{2U+2b\phi},
\end{equation}
and it is convenient to define
\begin{equation}\label{eq:effective-potential}
\mathcal V(\phi)=Q^2e^{2a\phi}+P^2e^{2b\phi}.
\end{equation}
With a dot denoting $\mathrm d/\mathrm d\tau$, the independent radial equations are \cite{E2MD2026}
\begin{equation}\label{eq:radial-second-order}
\ddot U=e^{2U}\mathcal V,\qquad
\ddot\phi=\frac{1}{2}e^{2U}\mathcal V_\phi,\qquad
\dot U^2+\dot\phi^2=e^{2U}\mathcal V.
\end{equation}

For fixed $Q,P>0$, let
\begin{equation}\label{eq:W-definition}
x=\frac{1}{a-b}\log\frac{Q}{P},\qquad A=Pe^{-bx},\qquad W(\phi)=A\,h(x+\phi).
\end{equation}
The mass equation immediately gives
\begin{equation}\label{eq:W-identity}
W^2+W_\phi^2=\mathcal V.
\end{equation}
Consequently,
\begin{equation}\label{eq:first-order-flow}
\dot U=-e^UW,\qquad \dot\phi=-e^UW_\phi,\qquad U(0)=\phi(0)=0
\end{equation}
solves Eq.~\eqref{eq:radial-second-order}. Differentiating the two first-order equations and using Eq.~\eqref{eq:W-identity} reproduces the second-order system, while the Hamiltonian constraint follows immediately. At spatial infinity,
\begin{equation}\label{eq:flow-asymptotic-charges}
M=W(0)=Pe^{-bx}h(x),\qquad
\Sigma=-W_\phi(0)=-Pe^{-bx}h'(x),
\end{equation}
in agreement with Eq.~\eqref{eq:mass-charge-relations}.

The global behavior of the flow is particularly simple. Set
\begin{equation}
Y=e^{-U}.
\end{equation}
Then
\begin{equation}\label{eq:Y-flow}
\dot Y=W,\qquad \dot\phi=-\frac{W_\phi}{Y}.
\end{equation}
Since $h$ has a unique minimum at $x_0$, the function $W$ has a unique minimum at
\begin{equation}\label{eq:phi-star}
\phi_\star=x_0-x,\qquad W_\star=A\mathcal R_0>0.
\end{equation}
Along the flow,
\begin{equation}\label{eq:W-monotonic}
\dot W=-\frac{W_\phi^2}{Y}\leq0,
\end{equation}
and therefore
\begin{equation}\label{eq:Y-bounds}
1+W_\star\tau\leq Y(\tau)\leq1+M\tau.
\end{equation}
The sign of $W_\phi$ on either side of its unique minimum shows that $\phi$ moves monotonically toward $\phi_\star$ and cannot cross it. Hence $Y$ remains positive and $\phi$ remains finite on every finite interval of $\tau$, so the reconstructed exterior is smooth for all $r>0$.

To determine the inner limit, set $s=\log Y$. Equation~\eqref{eq:Y-flow} gives
\begin{equation}\label{eq:phi-s-flow}
\frac{\mathrm d\phi}{\mathrm ds}=-\frac{W_\phi}{W}.
\end{equation}
Since $Y\geq1+W_\star\tau$, one has $s\to\infty$ as $\tau\to\infty$. Monotonicity then implies
\begin{equation}
\phi\longrightarrow\phi_\star,\qquad W\longrightarrow W_\star.
\end{equation}
It follows from Eq.~\eqref{eq:Y-flow} that
\begin{equation}
\frac{Y}{\tau}\longrightarrow W_\star.
\end{equation}
The areal radius $\mathscr R=rY$ therefore approaches
\begin{equation}\label{eq:inner-area}
\mathscr R\longrightarrow W_\star,\qquad
\mathcal A_{\rm inner}\longrightarrow4\pi W_\star^2,
\end{equation}
while
\begin{equation}\label{eq:inner-gtt}
g_{tt}=-Y^{-2}\sim-\frac{r^2}{W_\star^2}.
\end{equation}
Thus the global positive mass branch reconstructs an asymptotically flat exterior with a finite-area inner endpoint and the characteristic quadratic redshift of an extremal geometry. Neither property, however, guarantees a smooth horizon extension.

The distinction is already visible in the scalar field. Write
\begin{equation}
\varepsilon=\phi-\phi_\star.
\end{equation}
Since
\begin{equation}
\frac{W_{\phi\phi}(\phi_\star)}{W_\star}
=
\frac{h''(x_0)}{h(x_0)}
=c,
\end{equation}
Eq.~\eqref{eq:phi-s-flow} gives
\begin{equation}\label{eq:epsilon-flow}
\frac{\mathrm d\varepsilon}{\mathrm ds}=-c\varepsilon+O(\varepsilon^2).
\end{equation}
For a nonconstant flow, the linear term first gives exponential decay, after which the nonlinear correction is integrable. Hence $e^{cs}\varepsilon$ approaches a finite nonzero constant. Using $Y/\tau\to W_\star$, one obtains
\begin{equation}\label{eq:near-endpoint}
\varepsilon=C_r r^c[1+o(1)],\qquad
\phi_r=cC_r r^{c-1}[1+o(1)],\qquad C_r\neq0.
\end{equation}
The derivative asymptotic follows directly from the first-order flow.

The regularity of the endpoint can be tested in a freely propagated frame. Introduce the ingoing coordinate
\begin{equation}
v=t+\int Y^2\,\mathrm dr,
\end{equation}
for which
\begin{equation}\label{eq:ingoing-metric}
\mathrm ds^2=-Y^{-2}\mathrm dv^2+2\,\mathrm dv\,\mathrm dr+\mathscr R(r)^2\mathrm d\Omega_2^2,\qquad \mathscr R=rY.
\end{equation}
The vector $l=\partial_r$ is an affinely parametrized radial null vector, while $E=\mathscr R^{-1}\partial_\vartheta$ is parallelly propagated along it. In these coordinates the electric fields have only $vr$ components, so
\begin{equation}
F_{r\lambda}F_r{}^\lambda=
\widetilde F_{r\lambda}\widetilde F_r{}^\lambda=0.
\end{equation}
Since $g_{rr}=0$, the $rr$ Einstein equation reduces to
\begin{equation}\label{eq:Rrr}
R_{rr}=-2\frac{\mathscr R_{rr}}{\mathscr R}=2\phi_r^2.
\end{equation}
The corresponding parallelly propagated tidal component is therefore
\begin{equation}\label{eq:tidal-curvature}
R_{lElE}=-\frac{\mathscr R_{rr}}{\mathscr R}
=\phi_r^2
\sim c^2C_r^2r^{2c-2}.
\end{equation}

Under the standing assumption $a>0>b$, the attractive region is
\begin{equation}\label{eq:attractive-c}
-1<ab<0\qquad\Longleftrightarrow\qquad0<c<1.
\end{equation}
Equation~\eqref{eq:tidal-curvature} then diverges at the inner endpoint for every nonconstant-scalar flow. The charge ratio $q=q_0$ is exceptional: in that case $\phi_\star=0$, $W_\phi(0)=0$, and uniqueness of the first-order flow gives $\phi\equiv0$. The solution then reduces to the corresponding constant-scalar extremal Reissner--Nordstr\"om-type geometry. Thus the attractive part of the global mass diagram cannot, in general, be interpreted as a family of smooth nonconstant-scalar extremal horizons.

At the critical coupling $ab=-1$, for which $c=1$, the tidal component in Eq.~\eqref{eq:tidal-curvature} remains finite for a nonconstant flow. For $c>1$ it vanishes at the endpoint, but this is still insufficient to establish smooth extendibility. Since $r$ is an affine parameter along $l$, the noninteger leading power in Eq.~\eqref{eq:near-endpoint} obstructs a smooth scalar extension whenever $C_r\neq0$. A necessary condition for an analytic nonconstant-scalar endpoint is therefore
\begin{equation}\label{eq:analytic-horizon-condition}
c=n\in\mathbb N_{>0},\qquad
-ab=\frac{n(n+1)}{2}.
\end{equation}
On the symmetric subfamily $b=-a$, this becomes
\begin{equation}\label{eq:triangular-condition}
a^2=\frac{n(n+1)}{2},
\end{equation}
which is precisely the discrete ``triangular'' condition obtained from horizon analyticity in dyonic Einstein--Maxwell--dilaton theory \cite{Poletti1995,Triangular2015}. The sign in Eq.~\eqref{eq:analytic-horizon-condition} reflects our convention $a>0>b$; on the symmetric branch the invariant statement is the positive condition \eqref{eq:triangular-condition}. Equation~\eqref{eq:analytic-horizon-condition} is necessary, not sufficient: it removes the fractional-power obstruction but does not by itself prove the existence of a smooth or analytic horizon. This distinction is consistent with the broader observation that smooth extremal horizons can require nongeneric restrictions on couplings and boundary data \cite{SmoothHorizons2025,AxionSingular2026}.

The force and mass-curvature results of Sec.~\ref{Sec:Interactions} therefore apply to the full global positive mass branch and to the finite-area exterior reconstructed above. Requiring a smooth horizon restricts the subset of that branch that admits an interpretation as smooth extremal black holes, but it does not alter the sign theorems themselves.

\section{Conclusions}\label{Sec:Conclusions}

We have shown that positivity over the full positive charge cone selects a unique global analytic branch of the extremal E2MD mass equation, with no integrability assumption and no boundary condition imposed at the single-charge limits. On this branch the attractive--repulsive transition is genuinely global in charge ratio: for nonproportional positive charges, the leading force, the transverse curvature of the extremal mass surface, and the mass difference under charge addition all change sign at $ab=-1$. The force--mass correspondence previously observed in EMD theory \cite{EMD2023} is therefore exact throughout this E2MD sector, while the contrasting EMDA behavior \cite{EMDA2025} shows that it is not a generic feature of scalar-coupled theories.

The dimensional rescaling carries the sign theorem directly to higher dimensions. The four-dimensional radial reconstruction also separates two notions that need not coincide: a global extremal mass branch and a family of smooth extremal black holes. Every point of the branch defines an asymptotically flat exterior with a finite-area inner endpoint, but smooth continuation imposes additional conditions. In particular, generic nonconstant-scalar flows in the attractive region develop divergent parallelly propagated tidal curvature. The interaction theorem therefore belongs to the full positive mass branch, whereas smooth-horizon regularity selects a more restricted subset.

Two questions seem especially worth pursuing. One is whether the discrete regularity condition identified here becomes sufficient under suitable assumptions, with the exact Toda solutions providing natural benchmarks \cite{Rank2Toda2026}. The other is how much of the force--mass-geometry correspondence survives additional matter fields or higher-derivative interactions; the EMDA example \cite{EMDA2025} and known higher-derivative corrections to extremal forces \cite{HigherDerivatives2022} already suggest nontrivial departures. The central result of the present analysis is therefore the global nature of the E2MD attractive--repulsive transition, while smooth-horizon regularity remains a separate and more restrictive question.

\section*{Statements and Declarations}

\paragraph{Funding}
No funds, grants, or other support were received.

\paragraph{Competing interests}
The author has no relevant financial or non-financial interests to disclose.

\paragraph{Author contributions}
Ye Zhou: Conceptualization, Methodology, Formal analysis, Investigation, Writing -- original draft, Writing -- review and editing.

\paragraph{Data availability}
No datasets were generated or analyzed during this study.

\paragraph{Use of generative AI}
The author used OpenAI ChatGPT (GPT-5.6 Sol, accessed September 2026) for literature organization, language editing, and auxiliary checks of selected calculations. All derivations, mathematical arguments, scientific interpretations, and literature claims were independently verified by the author, who takes full responsibility for the manuscript.

\bibliography{reference}

\end{document}